\let\doendproof\endproof
\renewcommand\endproof{~\hfill$\qed$\doendproof}
\title{On the structure and syntactic complexity of generalized definite languages}
\author{Szabolcs Iv\'an \and Judit Nagy-Gy\"orgy}
\institute{University of Szeged, Hungary}
\begin{document}

\maketitle

\begin{abstract}
We give a forbidden pattern characterization for the class of generalized definite languages,
show that the corresponding problem is $\mathbf{NL}$-complete and can be solved in quadratic time.
We also show that their syntactic complexity coincides with that of the definite languages
and give an upper bound of $n!$ for this measure.
\end{abstract}

\section{Introduction}

A language is generalized definite if membership can be decided for a word by looking at its prefix and suffix of a given constant length.
Generalized definite languages and automata were introduced by Ginzburg~\cite{ginzburg} in 1966
and further studied in e.g.~\cite{ciricimrehsteinby,gecsegimreh,petkovic,steinby}.
This language class is strictly contained within the class of star-free languages, lying on the first level of the
dot-depth hierarchy~\cite{dotdepth}.
This class possess a characterization in terms of its syntactic semigroup \cite{perrin}: a regular language is generalized definite
if and only if its syntactic semigroup is locally trivial if and only if it satisfies a certain identity $x^\omega yx^\omega=x^\omega$.
This characterization is hardly efficient by itself when the language is given by its minimal automaton,
since the syntactic semigroup can be much larger than the automaton (a construction for a definite language with state complexity
-- that is, the number of states of its minimal automaton -- $n$
and syntactic complexity -- that is, the size of the transition semigroup of its minimal automaton --
$\lfloor e(n-1)!\rfloor$ is explicit in \cite{brzozo}).
However, as stated in~\cite{handbook}, Sec.~5.4, it is usually not necessary to compute the (ordered) syntactic semigroup
but most of the time one can develop a more efficient algorithm by analyzing the minimal automaton.
As an example for this line of research, recently, the authors of \cite{klima-polak} gave a nice characterization of minimal automata of
piecewise testable languages, yielding a quadratic-time decision algorithm, matching an alternative (but of course equivalent)
earlier (also quadratic) characterization of~\cite{trahtman} which improved the $\mathcal{O}(n^5)$ bound of~\cite{Stern}.

In this paper we give a forbidden pattern characterization for generalized definite languages in terms of the minimal automaton,
and analyze the complexity of the decision problem whether a given automaton recognizes a generalized definite language,
yielding an $\mathbf{NL}$-completeness result (with respect to logspace reductions) as well as a deterministic decision
procedure running in $\mathcal{O}(n^2)$ time (on a RAM machine).

There is an ongoing line of research for syntactic complexity of regular languages.
In general, a regular language with state complexity $n$ can have a syntactic complexity of $n^n$, already in the case
when there are only three input letters. There are at least two possible modifications of the problem:
one option is to consider the case when the input alphabet is binary (e.g. as done in \cite{holzer-konig,krawetz-lawrence-shallit}).
The second option is to study a strict subclass of regular languages.
In this case, the syntactic complexity of a class $\mathcal{C}$ of languages is a function $n\mapsto f(n)$,
with $f(n)$ being the maximal syntactic complexity a member of $\mathcal{C}$ can have whose state complexity
is (at most) $n$. The syntactic complexity of several language classes, e.g. (co)finite, reverse definite,
bifix--, factor-- and subword-free languages etc. is precisely determined in \cite{limsc}.
However, the exact syntactic complexity of the (generalized) definite languages and that of the star-free languages
(as well as the locally testable or the locally threshold testable languages) is not known yet.

We also address this problem and show that the syntactic complexity of generalized definite languages
coincides with that of definite languages, and show an upper bound $n!$ for this measure.
Since the lower bound is $\Omega((n-1)!)$, this is asymptotically optimal up to a logarithmic factor.

\section{Notation}

We assume the reader is familiar with the standard notions of automata and language theory,
but still we give a summary for the notation.

When $n\geq 0$ is an integer, $[n]$ stands for the set $\{1,\ldots,n\}$.
An \emph{alphabet} is a nonempty finite set $\Sigma$.
The set of \emph{words} over $\Sigma$ is denoted $\Sigma^*$, while $\Sigma^+$ stands for the set of \emph{nonempty words}.
The \emph{empty word} is denoted $\varepsilon$.
A \emph{language} over $\Sigma$ is an arbitrary set $L\subseteq\Sigma^*$ of $\Sigma$-words.

A (finite) \emph{automaton} (over $\Sigma$) is a system $\mathbb{A}=(Q,\Sigma,\delta,q_0,F)$ where
$Q$ is the finite set of states,
$q_0\in Q$ is the start state,
$F\subseteq Q$ is the set of final (or accepting) states,
and $\delta:Q\times \Sigma\to Q$ is the transition function.
The transition function $\delta$ extends in a unique way to a right action of the monoid $\Sigma^*$ on $Q$,
also denoted $\delta$ for ease of notation%
.
When $\delta$ is understood, we write $q\cdot u$, or simply $qu$ for $\delta(q,u)$.
Moreover, when $C\subseteq Q$ is a subset of states and $u\in\Sigma^*$ is a word, let $Cu$ stand for the set $\{pu:p\in C\}$
and when $L$ is a language, $CL=\{pu:p\in C,u\in L\}$.
The \emph{language recognized by $\mathbb{A}$} is $L(\mathbb{A})=\{x\in\Sigma^*:q_0x\in F\}$.
A language is \emph{regular} if it can be recognized by some finite automaton.

The state $q\in Q$ is \emph{reachable} from a state $p\in Q$ in $\mathbb{A}$, denoted $p\preceq_{\mathbb{A}} q$,
or just $p\preceq q$ if there is no danger of confusion, if $pu=q$ for some $u\in\Sigma^*$.
An automaton is \emph{connected} if its states are all reachable from its start state.

Two states $p$ and $q$ of $\mathbb{A}$ are \emph{distinguishable} if there exists a word $u\in\Sigma^*$ such that
exactly one of $pu$ and $qu$ belongs to $F$. In this case we say that $u$ \emph{separates} $p$ and $q$.
A connected automaton is called \emph{reduced} if each pair of distinct states is distinguishable.

It is known that for each regular language $L$ there exists a reduced automaton $\mathbb{A}_L$, unique up to isomorphism, recognizing $L$.
$\mathbb{A}_L$ can be computed from any automaton recognizing $L$ by an efficient algorithm called minimization and is called the
\emph{minimal automaton} of $L$.\marginpar{$\mathbb{A}_L$}

The classes of the equivalence relation $p\sim q\ \Leftrightarrow p\preceq q\textrm{ and }q\preceq p$ are called \emph{components} of $\mathbb{A}$.
A component $C$ is \emph{trivial} if $C=\{p\}$ for some state $p$ such that $pa\neq p$ for any $a\in\Sigma$,
and is a \emph{sink} if $C\Sigma\subseteq C$. It is clear that each automaton has at least one sink and sinks are never trivial.\marginpar{(trivial) components and sinks}
The \emph{component graph} $\Gamma(\mathbb{A})$ of $\mathbb{A}$ is an edge-labelled directed graph $(V,E,\ell)$ along with a mapping $c:Q\to V$
where $V$ is the set of the $\sim$-classes of $\mathbb{A}$, the mapping $c$ associates to each state $q$ its class $q/\sim=\{p:p\sim q\}$
and for two classes $p/\sim$ and $q/\sim$ there exists an edge from $p/\sim$ to $q/\sim$ labelled by $a\in\Sigma$
if and only if $p'a=q'$ for some $p'\sim p$, $q'\sim q$.
It is known that the component graph can be constructed from $\mathbb{A}$ in linear time.
Note that the mapping $c$ is redundant but it gives a possibility for determining whether $p\sim q$ holds in constant time on a
RAM machine, provided $Q=[n]$ for some $n>0$ and $c$ is stored as an array.

When $A$ and $B$ are sets, then $A^B$ denotes the set of all functions $f:B\to A$. When $f:B\to A$ and $C\subseteq B$,
then $f|_C:C\to A$ denotes the restriction of $f$ to $C$. When $A_1,\ldots,A_n$ are disjoint sets, $A$ is a set
and for each $i\in[n]$, $f_i:A_i\to A$ is a function, then the \emph{source tupling} of $f_1,\ldots,f_n$ is
the function $[f_1,\ldots,f_n]:\bigl(\mathop\bigcup\limits_{i\in[n]}A_i\bigr)\to A$ with $[f_1,\ldots,f_n](a)=f_i(a)$ for the
unique $i$ with $a\in A_i$.\marginpar{$[f_1,\ldots,f_n]$: source tupling}
Members of $Q^Q$ are called \emph{transformations} of $Q$, forming a semigroup with composition $(fg)(q)=g(f(q))$ as product.
When $\mathbb{A}=(Q,\Sigma,\delta,q_0,F)$ is an automaton, its \emph{transformation semigroup} $\mathcal{T}(\mathbb{A})$
consists of the set of transformations of $Q$ induced by nonempty words, i.e.
$\mathcal{T}(\mathbb{A})=\{u^\mathbb{A}:u\in\Sigma^+\}$ where $u^\mathbb{A}:Q\to Q$ is the transformation defined as $q\mapsto qu$.
A transformation $f:Q\to Q$ is called \emph{permutational} if there exists a set $D\subseteq Q$ with $|D|>1$ on which $f$ induces a permutation,
otherwise it's non-permutational.\marginpar{non-permutational transformation}
Observe that a non-permutational transformation $f$ is idempotent (i.e. $ff=f$) if and only if it is a constant function.
Alternatively, a transformation $f:Q\to Q$ is non-permutational for a finite $Q$ if and only if $f^{|Q|}$ is constant.
Another class of functions used in the paper is that of the \emph{elevating} functions: for the integers $0< k\leq n$, a function
$f:[k]\to[n]$ is elevating if $i<f(i)$ for each $i\in[k]$.\marginpar{elevating function}

\section{Patterns for subclasses of the star-free languages}
A language $L$ is 
\begin{itemize}
\item \emph{cofinite} if its complement is finite;
\item \emph{definite} if there exists a constant $k\geq 0$ such that for any $x\in\Sigma^*$, $y\in \Sigma^k$ we have $xy\in L\Leftrightarrow y\in L$;
\item \emph{reverse definite} if there exists a constant $k\geq 0$ such that for any $x\in\Sigma^k$, $y\in \Sigma^*$ we have $xy\in L\Leftrightarrow x\in L$;
\item \emph{generalized definite} if there exists a constant $k\geq 0$ such that for any $x_1,x_2\in\Sigma^k$ and $y\in\Sigma^*$ we have
  $x_1yx_2\in L\Leftrightarrow x_1x_2\in L$.
\end{itemize}
These are all subclasses of the star-free languages, i.e. can be built from the singletons with repeated use of the concatenation, finite union and complementation operations. It is known that the following decision problem is complete for $\mathbf{PSPACE}$: given a regular language $L$ with its minimal automaton, is $L$ star-free? In contrast, the question for these subclasses above are all tractable.

Minimal automata of the finite, cofinite, definite and reverse definite languages possess a characterization in terms of \emph{forbidden patterns}.
In our setting, a pattern is an edge-labelled, directed graph $P=(V,E,\ell)$, where $V$ is the set of vertices, $E\subseteq V^2$ is the set of edges,
and $\ell:E\to \mathcal{X}$ is a labelling function which assigns to each edge a variable.
An automaton $\mathbb{A}=(Q,\Sigma,\delta,q_0,F)$ \emph{admits a pattern $P=(V,E,\ell)$}\marginpar{admitting/avoiding a pattern} if there exists an \emph{injective} mapping $f:V\to Q$
and a map $h:\mathcal{X}\to \Sigma^+$ such that for each $(u,v)\in E$ labelled $x$ we have $f(u)\cdot h(x)=f(v)$.
Otherwise $\mathbb{A}$ \emph{avoids} $P$.
  
As an example, consider the pattern $P_f$ on Figure~\ref{fig-patterns}.

\begin{figure}[h!]
\centering
\subfloat[][Pattern $P_f$.]{
\begin{tikzpicture}[shorten >=1pt,node distance=2cm,>=stealth',thick]
\node[state] (1) {$p$};
\node[state] (2) [right of=1] {$q$};
\draw [->] (1) to[loop above] node[auto] {$x$} (1);
\draw [->] (2) to[loop above] node[auto] {$y$} (2);
\end{tikzpicture}
}
\hfil
\subfloat[][Pattern $P_d$.]{
\begin{tikzpicture}[shorten >=1pt,node distance=2cm,>=stealth',thick]
\node[state] (1) {$p$};
\node[state] (2) [right of=1] {$q$};
\draw [->] (1) to[loop above] node[auto] {$x$} (1);
\draw [->] (2) to[loop above] node[auto] {$x$} (2);
\end{tikzpicture}
}
\hfil
\subfloat[][Pattern $P_r$.]{
\begin{tikzpicture}[shorten >=1pt,node distance=2cm,>=stealth',thick]
\node[state] (1) {$p$};
\node[state] (2) [right of=1] {$q$};
\draw [->] (1) to[loop above] node[auto] {$x$} (1);
\draw [->] (1) to[bend right] node[auto] {$y$} (2);
\end{tikzpicture}
}
\caption{Patterns for (co)finite, definite and reverse definite languages.}
\label{fig-patterns}
\end{figure}
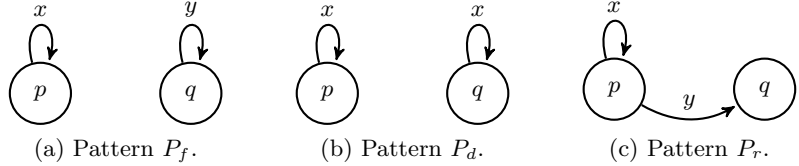

An automaton admits $P_f$ iff there exist \emph{different} states $p,q\in Q$ and (not necessarily different) words $x,y\in\Sigma^+$ such
that $px=p$ and $qy=q$. It is easy to see that an automaton $\mathbb{A}$ avoids $P_f$ iff it has a unique sink
which is a set consisting of a single state $p$, and all the other components are trivial;
if $p$ is a rejecting state, then $L(\mathbb{A})$ is finite, otherwise it is cofinite.
The condition is also necessary in the following sense: a language is finite or cofinite if and only if its minimal automaton avoids $P_f$. 

As other examples, consider the patterns $P_d$ and $P_r$ on Figure~\ref{fig-patterns}.

It is easy to see that if $\mathbb{A}=(Q,\Sigma,\delta,q_0,F)$ is the minimal automaton of a reverse definite language, then
it avoids $P_r$: if there are states $p\neq q\in Q$ and words $x,y\in\Sigma^+$ with $px=p$ and $py=q$,
then $L=L(\mathbb{A})$ is not reverse definite. Indeed, suppose $L$ is a $k$-reverse definite language and let $u$ be a word with
$q_0u=p$. Since $p\neq q$ and $\mathbb{A}$ is minimal, there is a word $w$ distinguishing $p$ and $q$. Thus,
$ux^kw$ and $ux^kyw$ are two words with the same prefix of length $k$, and exactly one of them is in $L$, a contradiction.

Also, if $L=L(\mathbb{A})$ is a $k$-definite language with $\mathbb{A}$ being its minimal automaton, then $\mathbb{A}$
avoids $P_d$: if there are states $p\neq q\in Q$ and a word $x$ with $px=p$, $qx=q$, then let $u,v,w\in\Sigma^*$ be words
such that $q_0u=p$, $q_0v=q$ and $w$ separates $p$ and $q$. Then $ux^kw$ and $vx^kw$ have the same suffix of length $k$,
with exactly one of them being a member of $L$, a contradiction.

It can be seen (see e.g. \cite{brzozo}) that avoiding these patterns are also sufficient: a regular language is definite
(reverse definite, resp.) if and only if its minimal automaton avoids $P_d$ ($P_r$, resp.).
Note that avoiding $P_d$ is equivalent to state that each nonempty word induces a transformation with at most one fixed point,
which is further equivalent to state that each nonempty word induces a non-permutational transformation. See~\cite{brzozo}\footnote{Since -- up to our knowledge -- ~\cite{brzozo} has not been published yet in a peer-reviewed journal or conference proceedings, we include a proof of this fact. Nevertheless, we do not claim this result to be ours, by any means.}.)

Consequently, all the following questions are in the complexity class $\mathbf{NL}$: given a language $L$ by its
minimal automaton, is $L$ (co)finite / definite / reverse definite?

\section{Results}

In this section we give a new characterization of the minimal automata of generalized definite languages,
leading to an $\mathbf{NL}$-completeness result of the corresponding decision problem,
as well as a low-degree polynomial deterministic algorithm,
and show that the syntactic complexity of generalized definite languages is the same as that of the definite languages.
We also give an upper bound $n!$ for the syntactic complexity of (generalized) definite languages.

\subsection{Forbidden pattern characterization}

We need the following well-known lemma:
\begin{lemma}
\label{lemma-power}
For any nonempty finite set $C$ there exists a constant $m=m(|C|)$ depending only on the size of $C$
such that in any product $f=f_1f_2\ldots f_m$ with $f_i\in C^C$ for each $i\in[m]$, an idempotent factor appears,
i.e. $f_j\ldots f_k$ is an idempotent transformation of $C$ for some $1\leq j\leq k\leq m$.
\end{lemma}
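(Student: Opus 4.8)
The plan is to prove this by a pigeonhole/Ramsey-type argument on the finite semigroup $C^C$ of all transformations of $C$. Let $N=|C^C|=|C|^{|C|}$, which depends only on $|C|$. The key observation is that in any sufficiently long product, some nonempty infix must evaluate to an idempotent. I would first recall that every finite semigroup $S$ has the property that for every element $s\in S$, some power $s^k$ is idempotent (the standard fact that $\{s,s^2,s^3,\ldots\}$ contains exactly one idempotent); for $S=C^C$ one may even take the uniform exponent $k=N!$ or simply $k\le N$.

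Concretely, I would argue as follows. Set $m=m(|C|)=N+1$ where $N=|C|^{|C|}$. Given any product $f_1f_2\cdots f_m$ with each $f_i\in C^C$, consider the $m$ partial products (prefixes) $g_0=\mathrm{id}$ is not available since $\mathrm{id}$ need not lie in $\mathcal T$, so instead consider the $m$ nonempty prefixes $g_i=f_1f_2\cdots f_i$ for $i\in[m]$. Wait — the cleaner route: consider the $m$ values $g_i = f_1\cdots f_i$ for $i=1,\dots,m$; since there are $m=N+1$ of them but only $N$ transformations of $C$, by pigeonhole there exist $1\le j<k\le m$ with $g_j=g_k$. Then writing $e=f_{j+1}f_{j+2}\cdots f_k$ we get $g_j e = g_k = g_j$, i.e. $e$ fixes $g_j$ on the right. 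This alone does not make $e$ idempotent, so one more step is needed: iterate. Replace the single block by repeated blocks — but the product has fixed length $m$, so instead I would choose $m$ larger, namely $m=m(|C|)=(N+1)\cdot N$ or more safely $m=N\cdot(N+1)$, and extract $N$ disjoint consecutive blocks $e_1,\dots,e_N$ each of which is a product of $N+1$ consecutive factors; by the above each $e_t$ satisfies $g e_t=g$ for the appropriate prefix $g$. That is still not quite idempotency of a single infix.

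The clean and standard formulation, which I would adopt, uses the Ramsey-type colouring argument directly on infixes. For indices $0\le i<j\le m$ (here let $g_0$ denote the empty product acting as identity conceptually, handled by bookkeeping) colour the pair $\{i,j\}$ by the transformation $f_{i+1}f_{i+2}\cdots f_j\in C^C$. This is a colouring of the edges of the complete graph on $m+1$ vertices with at most $N$ colours. By Ramsey's theorem, if $m+1$ is at least the Ramsey number $R(3;N)$ (the least $M$ such that any $N$-colouring of $K_M$ contains a monochromatic triangle), there are indices $i<j<k$ with $\{i,j\}$, $\{j,k\}$ and $\{i,k\}$ all the same colour, say $g$. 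Then $g = f_{i+1}\cdots f_j$, $g=f_{j+1}\cdots f_k$, and $g = f_{i+1}\cdots f_k = (f_{i+1}\cdots f_j)(f_{j+1}\cdots f_k)=gg$, so $g$ is idempotent, and it is the transformation induced by the consecutive infix $f_{i+1}\cdots f_k$ (with $1\le i+1\le k\le m$), which is exactly what is claimed. Taking $m=m(|C|)=R(3;N)-1$ with $N=|C|^{|C|}$ gives a bound depending only on $|C|$.

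The main obstacle — really the only subtlety — is avoiding a naive one-step pigeonhole, which yields only a fixed point rather than an idempotent; the resolution is to apply Ramsey's theorem to the $N$-colouring of infixes rather than pigeonhole to the prefixes, so that one gets the multiplicative relation $gg=g$ for free from a monochromatic triangle. Everything else (finiteness of $C^C$, existence of the Ramsey number, the associativity bookkeeping that lets $f_{i+1}\cdots f_k$ factor as $(f_{i+1}\cdots f_j)(f_{j+1}\cdots f_k)$) is routine. I would remark that much better bounds on $m(|C|)$ are known (polynomial in $|C|$, in fact $m(|C|)\le 2^{|C|}$ or even less via the fact that any product of length $\ge |C|$ already has image size equal to that of the whole product and one can then recurse), but since the statement only requires \emph{existence} of such a constant, the Ramsey argument suffices and is the shortest to write down.
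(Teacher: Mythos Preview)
Your final argument is correct and is essentially the same as the paper's own proof: both colour the edge $\{i,j\}$ of a complete graph by the infix transformation $f_{i+1}\cdots f_j\in C^C$ and invoke the multicolour Ramsey theorem for triangles to obtain $g=gg$. The only differences are cosmetic (your vertex set is $\{0,\ldots,m\}$ versus the paper's $[m]$, and you include an exploratory pigeonhole false start before arriving at Ramsey); the core idea, the colouring, and the conclusion coincide.
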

Note to the reviewers: we were unable to locate the first appearance with proof of Lemma~\ref{lemma-power}, thus we decided
to include its proof in the Appendix.

We are ready to show that a regular language is generalized definite if and only if its minimal automaton avoids the pattern $P_g$, depicted on Figure~\ref{fig-gen}.
\begin{figure}[h!]
\centering
\scalebox{0.8}{
\begin{tikzpicture}[shorten >=1pt,node distance=2cm,>=stealth',thick]
\node[state] (1) {$p$};
\node[state] (2) [right of=1] {$q$};
\draw [->] (1) to[loop above] node[auto] {$x$} (1);
\draw [->] (2) to[loop above] node[auto] {$x$} (2);
\draw [->] (1) to[bend right] node[auto] {$y$} (2);
\end{tikzpicture}
}
\caption{Forbidden pattern $P_g$ for the generalized definite languages.}
\label{fig-gen}
\end{figure}
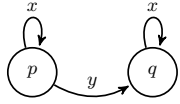

\begin{theorem}
\label{thm-pattern}
The following are equivalent for a reduced automaton $\mathbb{A}$:
\begin{enumerate}
\item[i)] $\mathbb{A}$ avoids $P_g$.
\item[ii)] Each nontrivial component of $\mathbb{A}$ is a sink, and for each nonempty word $u$ and sink $C$ of $\mathbb{A}$,
  the transformation $u|_C:C\to C$ is non-permutational.
\item[iii)] $\mathbb{A}$ recognizes a generalized definite language.
\end{enumerate}
\end{theorem}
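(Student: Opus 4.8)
The plan is to prove the three implications $i)\Rightarrow ii)\Rightarrow iii)\Rightarrow i)$, using Lemma~\ref{lemma-power} as the main engine for the first step.

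\textbf{Step 1: $i)\Rightarrow ii)$.} Suppose $\mathbb{A}$ avoids $P_g$. First I would show every nontrivial component $C$ is a sink. If not, there is a state $p$ in a nontrivial component $C$ and a letter (or word) leaving $C$; since $C$ is nontrivial there is a nonempty word $v$ with $pv = p$, and since $C$ is not a sink there is a state $q \notin C$ reachable from $p$, say $pw = q$ with $q \not\preceq p$. Now I want to produce the $P_g$ pattern. The trick is to iterate: consider the word $w$ acting on the orbit; because the component is not a sink, after sufficiently many applications of a suitable power of $vw$ (or using Lemma~\ref{lemma-power}) one can find a nonempty word $x$ and two distinct states on which $x$ acts as the identity, with one reachable from the other via some $y$ — this is exactly $P_g$. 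Concretely, I would take a large product of transformations induced by a word cycling through the relevant states and extract an idempotent factor $x$; if $x$ had two fixed points in different components we would contradict the component structure, and if it fixes $p$ but $p$ can reach another fixed point we get $P_g$ directly. For the second half of $ii)$: if some nonempty word $u$ acts as a permutation on a subset $D \subseteq C$ of a sink with $|D|>1$, pick distinct $p,q \in D$; then a suitable power $x = u^t$ fixes both $p$ and $q$, and since $C$ is a sink and strongly connected there is a word $y$ with $py = q$, yielding $P_g$. This direction is the heart of the argument.

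\textbf{Step 2: $ii)\Rightarrow iii)$.} Assume $ii)$. I would show $L(\mathbb{A})$ is $k$-generalized definite for $k$ large enough, e.g. $k = |Q| + m(|Q|)$ with $m$ the constant from Lemma~\ref{lemma-power}. Given $x_1, x_2 \in \Sigma^k$ and $y \in \Sigma^*$, I must show $q_0 x_1 y x_2 \in F \Leftrightarrow q_0 x_1 x_2 \in F$. Write $p = q_0 x_1$. Since $|x_1| \geq |Q|$, the state $p$ lies in a sink $C$: indeed, writing $x_1 = x_1' x_1''$ appropriately, after $|Q|$ steps we are in some component that every further step stays within, which by $ii)$ combined with the trivial-or-sink dichotomy must be a sink. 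Now both $py$ (if it stays in $C$ — which it does, as $C$ is a sink) and $p$ are in $C$, and by Lemma~\ref{lemma-power} the suffix $x_2$ (being long) contains, as a factor acting on $C$, an idempotent transformation $e = x_2|_C^{\text{(factor)}}$; since $e$ is non-permutational and idempotent on $C$ it is constant on $C$. Therefore $p y x_2$ and $p x_2$ are both sent, upon reading past that idempotent factor, to the \emph{same} state of $C$, hence reading the rest of $x_2$ gives the same final state. Thus $q_0 x_1 y x_2 = q_0 x_1 x_2$, which is more than enough.

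\textbf{Step 3: $iii)\Rightarrow i)$.} This is the easy contrapositive, analogous to the arguments already given in the excerpt for $P_d$ and $P_r$. Suppose $\mathbb{A}$ admits $P_g$: there are distinct states $p \neq q$ and nonempty words $x, y$ with $px = p$, $qx = q$, $py = q$. Let $u$ be a word with $q_0 u = p$ (using connectedness) and let $w$ separate $p$ and $q$ (using reducedness). If $L = L(\mathbb{A})$ were $k$-generalized definite, consider the two words $a = u x^k \cdot x^k w$ and $b = u x^k \cdot y \cdot x^k w$: wait — I need the prefix/suffix of length $k$ to match. Set $z_1 = u x^k$ with the relevant length-$k$ suffix being $x^k$'s tail, and arrange $z_2 = x^k w$; then $a = z_1 z_2$ and $b = z_1 y z_2$ (absorbing $y$ into the middle), and $q_0 a = p x^k w = p w$ while $q_0 b = p y x^k w = q x^k w = q w$, exactly one of which is in $F$. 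Since $a$ and $b$ share the length-$k$ prefix of $u x^k$ (pad $u$ if needed so $|u x^k| \geq k$, or use that $x^k$ already has length $\geq k$ when $|x| \geq 1$) and the length-$k$ suffix of $x^k w$, generalized definiteness is contradicted.

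\textbf{Main obstacle.} The genuinely delicate step is $i)\Rightarrow ii)$, specifically showing that avoiding $P_g$ forces every nontrivial component to be a sink: one must carefully combine reachability leaving a cyclic component with Lemma~\ref{lemma-power} to manufacture the two $x$-loops at distinct states connected by a $y$-edge, rather than accidentally producing only a $P_d$ or $P_r$ pattern. The rest is routine bookkeeping with word lengths and the idempotent-factor lemma.
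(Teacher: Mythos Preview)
Your cycle $i)\Rightarrow ii)\Rightarrow iii)\Rightarrow i)$ is the same as the paper's, and your arguments for $ii)\Rightarrow iii)$ and $iii)\Rightarrow i)$ are essentially identical to the paper's (the paper even uses the same witnesses $ux^kx^kw$ versus $ux^kyx^kw$). The second half of your $i)\Rightarrow ii)$, showing that words act non-permutationally on sinks, also matches the paper exactly.

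The gap is in the first half of $i)\Rightarrow ii)$, where you argue that every nontrivial component is a sink. Your sketch there does not work as written: iterating $vw$ from $p$ gives $p(vw)=pw=q$, then $q(vw)=qvw$, and there is no reason for this orbit to return to $p$ or to produce a word that simultaneously fixes $p$ and some other state. Invoking Lemma~\ref{lemma-power} does not help either, because an idempotent factor of a long product need not fix $p$, and you need $px=p$ to hold for the \emph{same} $x$ that fixes a second state. You correctly flag this as the main obstacle, but the tool you reach for is the wrong one.

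The paper's fix is simple and does not use Lemma~\ref{lemma-power} at all in this implication: reverse the order of the two halves. First prove non-permutationality on every sink (your easy argument). \emph{Then}, given $p$ in a nontrivial non-sink component with $pv=p$, pick any sink $C'$ reachable from $p$. Since $v|_{C'}$ is non-permutational (just proved), the power $x=v^{|C'|}$ is constant on $C'$; call its unique value $q\in C'$. Now $px=p$ (because $pv=p$), $qx=q$ (because $x$ is constant on $C'$ with value $q$), and $py=q$ for some $y$ (reachability), so $\mathbb{A}$ admits $P_g$. The key idea you are missing is to reuse the \emph{same} word $v$ that loops at $p$ as the word whose power collapses the downstream sink.
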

\begin{proof}
Let $\mathbb{A}=(Q,\Sigma,\delta,q_0,F)$ be a reduced automaton.

{\bf i)$\to$ii).} Suppose $\mathbb{A}$ avoids $P_g$.
Suppose that $u|_C$ is permutational for some sink $C$ and word $u\in\Sigma^+$.
Then there exists a set $D\subseteq C$ with $|D|>1$ such that $u$ induces a permutation
on $D$. Then, $x=u^{|D|!}$ is the identity on $D$. Choosing arbitrary distinct states $p,q\in D$
and a word $y$ with $py=q$ (such $y$ exists since $p$ and $q$ are in the same component of $\mathbb{A}$),
we get that $\mathbb{A}$ admits $P_g$ by the $(p,q,x,y)$ defined above, a contradiction.
Hence, $u|_C$ is non-permutational for each sink $C$ and word $u\in\Sigma^+$.

Now assume there exists a nontrivial component $C$ which is not a sink.
Then, $pu=p$ for some $p\in C$ and word $u\in\Sigma^+$. Since $C$ is not a sink, there exists
a sink $C'\neq C$ reachable from $p$ (i.e. all of its members are reachable from $p$).
Since $u$ induces a non-permutational transformation on $C'$, $x=u^{|C'|}$ induces a constant function
on $C'$. Let $q$ be the unique state in the image of $x|_{C'}$. Since $C'$ is reachable from $p$,
there exists some nonempty word $y$ such that $py=q$. Hence, $px=p$, $qx=q$, $py=q$ and $\mathbb{A}$
admits $P_g$, a contradiction.

{\bf ii)$\to$iii).} Suppose the condition of ii) holds.
We show that $L(\mathbb{A})$ is generalized definite. Let $n=m(|Q|)$ be the value defined in Lemma~\ref{lemma-power}.
Let $x=x_1yx_2$ with $x_1,x_2\in\Sigma^n$, $y\in\Sigma^*$.
It suffices to show that $q_0x_1yx_2=q_0x_1x_2$.
Since $|x_1|\geq|Q|$, some state $p$ is visited at least twice on the path determined by $x_1$.
Hence $p$ belongs to a nontrivial component $C$ of $\mathbb{A}$, which has to be a sink by the assumption of ii).
Thus, $q_0x_1\in C$ and $q_0x_1y\in C$ as well.
By Lemma~\ref{lemma-power}, $x_2$ can be written as 
$x_2=x_{2,1}x_{2,2}x_{2,3}$ with $x_{2,2}$ inducing an idempotent function on $C$.
Since the function induced by $x_{2,2}$ is also non-permutational on $C$, it is a constant function on $C$,
hence $x_2$ induces a constant function as well. Thus $px_2=pyx_2$ and hence $q_0x_1yx_2=q_0x_1x_2$.

{\bf iii)$\to$i).} Suppose $L(\mathbb{A})$ is $k$-generalized definite for some $k>0$ and that $\mathbb{A}$ admits $P_g$,
i.e. $px=p$, $qx=q$ and $py=q$ for some distinct states $p,q$ and nonempty words $x,y$.
Since $\mathbb{A}$ is reduced, $p=q_0u$ for some $u\in\Sigma^*$, and there exists a word $w$ distinguishing $p$ and $q$.
Considering the words $ux^kx^kw$ and $ux^kyx^kw$ we get that they have the same prefix and suffix of length $k$,
but exactly one of them is a member of $L(\mathbb{A})$, a contradiction.
\end{proof}

\subsection{Complexity issues}

Using the characterization given in Theorem~\ref{thm-pattern}, we study the complexity of the following decision problem $\textsc{GenDef}$: given a finite automaton $\mathbb{A}$, is $L(\mathbb{A})$ a generalized definite language?

\begin{theorem}
Problem $\textsc{GenDef}$ is $\mathbf{NL}$-complete.
\end{theorem}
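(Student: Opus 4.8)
The plan is to prove membership in $\mathbf{NL}$ and $\mathbf{NL}$-hardness separately, using the equivalence i)$\leftrightarrow$iii) from Theorem~\ref{thm-pattern}, so that the algorithmic task reduces to detecting the pattern $P_g$ in the minimal automaton.

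\textbf{Membership in $\mathbf{NL}$.}
First I would observe that, given an arbitrary input automaton, minimization is computable in logarithmic space (it amounts to computing reachability and the distinguishability relation, both $\mathbf{NL}$-predicates), so we may assume the input is already reduced; alternatively, we work with the given automaton and phrase everything in terms of reachable, distinguishable states. By Theorem~\ref{thm-pattern}, $L(\mathbb{A})$ is \emph{not} generalized definite iff $\mathbb{A}$ admits $P_g$, i.e. iff there are distinct states $p\neq q$ and nonempty words $x,y$ with $px=p$, $qx=q$, $py=q$. I would design a nondeterministic logspace procedure for this \emph{complement} predicate and then invoke the Immerman--Szelepcs\'enyi theorem ($\mathbf{NL}=\mathbf{co\text{-}NL}$) to conclude $\textsc{GenDef}\in\mathbf{NL}$. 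The search for the pattern is carried out on the ``square'' automaton $\mathbb{A}\times\mathbb{A}$ whose state set is $Q\times Q$: guess $p$ and $q$ with $p\neq q$; verify $p$ is reachable from $q_0$; verify $p$ and $q$ are distinguishable (guess a separating word letter by letter, keeping only the current pair $(p',q')$ in memory, accept when exactly one coordinate is final); verify the existence of a nonempty $y$ with $py=q$ by guessing $y$ one letter at a time on the first coordinate; and verify the existence of a nonempty common loop word $x$ with $px=p$ and $qx=q$ by walking in $\mathbb{A}\times\mathbb{A}$ from $(p,q)$, guessing $x$ letter by letter, and checking we can return to $(p,q)$ after at least one step. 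Each of these subroutines stores only a constant number of states of $Q$ plus counters bounded by $|Q|^2$, hence uses $\mathcal{O}(\log n)$ space; running them in sequence (reusing the work tape) stays in logspace.

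\textbf{$\mathbf{NL}$-hardness.}
For hardness I would reduce from the standard $\mathbf{NL}$-complete directed $s$--$t$ reachability problem ($\textsc{STCON}$) via a logspace many-one reduction. Given a directed graph $G$ with designated vertices $s,t$, the idea is to build an automaton that recognizes a generalized definite language if and only if $t$ is \emph{not} reachable from $s$ in $G$ --- and then, since $\mathbf{NL}=\mathbf{co\text{-}NL}$, reducing from the complement is just as good (or one reduces from $\textsc{non\text{-}STCON}$ directly). Concretely: take the alphabet to encode moves along edges of $G$ (say one letter per edge, or a binary encoding of out-edges), let the states include the vertices of $G$ plus a small gadget so that $s$ is reachable from the start state; attach at $t$ a gadget consisting of two distinct states with a common self-loop $x$ and a $y$-transition between them --- i.e. an explicit copy of $P_g$ rooted at $t$ --- and make those two states distinguishable by a final/non-final split. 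If $t$ is reachable from $s$, the path word drives the start state into the $P_g$-gadget, so $\mathbb{A}$ admits $P_g$ and the language is not generalized definite; if $t$ is unreachable, the gadget is dead, every nontrivial component is a harmless sink and one checks condition ii) holds, so the language is generalized definite. Care must be taken to keep $\mathbb{A}$ reduced (collapse or avoid duplicate states, or simply note that admitting $P_g$ is preserved under minimization since Theorem~\ref{thm-pattern} characterizes the property of $L(\mathbb{A})$, not of a particular automaton) and to make $\delta$ total (add a trap state). The whole construction is clearly computable in logspace, as it is a local rewriting of $G$.

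\textbf{Main obstacle.}
The routine direction is membership; the delicate part is the hardness gadget. The pattern $P_g$ is ``almost trivial'' to embed --- two states with a shared loop are easy to provide --- so the real work is ensuring the \emph{converse}: when $t$ is unreachable we must guarantee the resulting automaton genuinely recognizes a generalized definite language, which means verifying that \emph{no} word induces a permutation on any nontrivial component and that there are no stray non-sink nontrivial components created by the encoding of $G$'s edges. The cleanest way around this is to let the ``vertex'' part of the automaton be totally acyclic except for the deliberately planted gadget at $t$ (possible because in $\textsc{non\text{-}STCON}$ we may assume $G$ is a DAG, or we unfold/layer $G$ up to depth $|V|$ at logspace cost), so that the only nontrivial component in the unreachable case is a single harmless sink, making condition ii) of Theorem~\ref{thm-pattern} immediate. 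I expect this layering trick, together with a short argument that minimization commutes with the $P_g$-admission property, to close the proof.
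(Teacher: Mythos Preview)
Your membership argument is essentially the paper's: guess $p\neq q$, verify $p\to q$ reachability and a common nonempty loop at $(p,q)$ in $\mathbb{A}\times\mathbb{A}$, then invoke $\mathbf{NL}=\mathrm{co}\mathbf{NL}$. The paper minimizes first (citing that DFA minimization is in $\mathbf{NL}$) rather than carrying reachability/distinguishability checks along, but this is cosmetic.

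For hardness the two constructions genuinely differ. You plant an explicit copy of $P_g$ behind $t$ (two fresh states with a shared loop and a $y$-edge, one final and one not) so that the gadget is present in the reachable part iff $t$ is reachable from $s$. The paper instead reduces from \textsc{DAG-Reach} and adds a single back-edge $n\to 1$ together with a fresh absorbing accepting state $n{+}1$: reachability of $n$ from $1$ then \emph{creates a cycle} through $1$, and that cycle is a nontrivial non-sink component, which by condition~ii) of Theorem~\ref{thm-pattern} already forces $P_g$ (with $q=n{+}1$). No separate two-state gadget is needed. In the unreachable case the reachable part is the acyclic fragment of $G$ plus the singleton sink $\{n{+}1\}$, so condition~ii) holds trivially. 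Both routes are correct; the paper's is a bit more economical because the ``pattern'' arises from the cycle itself rather than from hand-built extra states, and the only delicate step---showing that the reduced form of the constructed automaton admits $P_g$ iff the construction does---is handled by an explicit short argument rather than the appeal you make to Theorem~\ref{thm-pattern} being a property of $L(\mathbb{A})$. Your appeal is valid (once you note the gadget states are reachable and distinguishable, their images in the minimal automaton are distinct and still witness $P_g$), and your identification of the main obstacle---ensuring acyclicity of the non-gadget part so that the unreachable case is genuinely generalized definite---together with the DAG/layering fix is exactly right; the paper simply starts from \textsc{DAG-Reach} to get this for free.
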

\begin{proof}
First we show that $\textsc{GenDef}$ belongs to $\mathbf{NL}$.
By~\cite{cho-huynh}, minimizing a DFA can be done in nondeterministic logspace.
Thus we can assume that the input is already minimized, since the class of
(nondeterministic) logspace computable functions is closed under composition.

Consider the following algorithm:
\begin{enumerate}
\item Guess two different states $p$ and $q$.
\item Let $s:=p$.
\item Guess a letter $a\in \Sigma$. Let $s:=sa$.
\item If $s=q$, proceed to Step 5. Otherwise go back to Step 3.
\item Let $p':=p$ and $q':=q$.
\item Guess a letter $a\in\Sigma$. Let $p':=p'a$ and $q'=q'a$.
\item If $p=p'$ and $q=q'$, accept the input. Otherwise go back to Step 6.
\end{enumerate}
The above algorithm checks whether $\mathbb{A}$ admits $P_g$: first it guesses $p\neq q$,
then in Steps 2--4 it checks whether $q$ is accessible from $p$,
and if so, then in Steps 5--7 it checks whether there exists a word $x\in\Sigma^+$ with $px=p$ and $qx=q$.
Thus it decides\footnote{Note that in this form, the algorithm can enter an infinite loop which fits
into the definition of nondeterministic log\emph{space}.
Introducing a counter and allowing at most $n$ steps in the first cycle and at most $n^2$ in the second
we get a nondeterministic algorithm using logspace and polytime, as usual.} the complement of $\textsc{GenDef}$, in nondeterministic logspace; since $\mathbf{NL}=\mathrm{co}\mathbf{NL}$,
we get that $\textsc{GenDef}\in\mathbf{NL}$ as well.

For $\mathbf{NL}$-completeness we recall from~\cite{jones-lien-laaser} that the reachability problem for DAGs ($\textsc{DAG-Reach}$)
is complete for $\mathbf{NL}$:
given a directed acyclic graph $G=(V,E)$ on $V=[n]$ with $(i,j)\in E$ only if $i<j$,
is $n$ accessible from $1$?
We give a logspace reduction from $\textsc{DAG-Reach}$ to $\textsc{GenDef}$ as follows.
Let $G=([n],E)$ be an instance of $\textsc{DAG-Reach}$.
For a vertex $i\in[n]$, let $N(i)=\{j:(i,j)\in E\}$ stand for the set of its neighbours and
let $d(i)=|N(i)|<n$ denote the outdegree of $i$. When $j\in[d(i)]$, then the $j$th neighbour of $i$, denoted $n(i,j)$
is simply the $j$th element of $N(i)$ (with respect to the usual ordering of integers of course).
Note that for any $i\in [n]$ and $j\in[d(i)]$ both $d(i)$ and the $n(i,j)$ (if exists) can be computed in logspace.

We define the automaton $\mathbb{A}=([n+1],[n],\delta,1,\{n+1\})$
where 
\[\delta(i,j)=\left\{\begin{array}{ll}
n+1&\hbox{if }(i=n+1)\hbox{ or }(j=n)\hbox{ or }(i<n\hbox{ and }d(i)<j);\\
1&\hbox{if }i=n\hbox{ and }j<n;\\
n(i,j)&\hbox{otherwise.}\\
\end{array}\right.\]
Note that $\mathbb{A}$ is indeed an automaton, i.e. $\delta(i,j)$ is well-defined for each $i,j$.

We claim that $\mathbb{A}$ admits $P_g$ if and only if $n$ is reachable from $1$ in $G$.
Observe that the underlying graph of $\mathbb{A}$ is $G$, with a new edge $(n,1)$ and with a new vertex $n+1$, which is a neighbour of each vertex.
Hence, $\{n+1\}$ is a sink of $\mathbb{A}$ which is reachable from all other states.
Thus $\mathbb{A}$ admits $P_g$ if and only if there exists a nontrivial component of $\mathbb{A}$ which is different from $\{n+1\}$.
Since in $G$ there are no cycles, such component exists if and only if the addition of the edge $(n,1)$ introduces a cycle,
which happens exactly in the case when $n$ is reachable from $1$. Note that it is exactly the case when $1x=1$ for some word $x\in\Sigma^+$.

What remains is to show that the \emph{reduced} form $\mathbb{B}$ of $\mathbb{A}$ admits $P_g$ if and only if $\mathbb{A}$ does.
First, both $1$ and $n+1$ are in the connected part $\mathbb{A}'$ of $\mathbb{A}$, and are distinguishable by the empty word
(since $n+1$ is final and $1$ is not).
Thus, if $\mathbb{A}$ admits $P_g$ with $1x=1$ and $(n+1)x=n+1$ for some $x\in\Sigma^+$, then $\mathbb{B}$ admits $P_g$ with
$h(1)x=h(1)$ and $h(n+1)x=h(n+1)$ (with $h$ being the homomorphism from the connected part of $\mathbb{A}$ onto its reduced form).
For the other direction, assume $h(p)x_0=h(p)$ for some state $p\neq n+1$
(note that since $n+1$ is the only final state, $p\neq n+1$ if and only if $h(p)\neq h(n+1)$).
Let us define the sequence $p_0,p_1,\ldots$ of states of $\mathbb{A}$ as $p_0=p$, $p_{t+1}=p_tx_0$.
Then, for each $i\geq 0$, $h(p_i)=h(p)$, thus $p_i\in[n]$. Thus, there exist indices $0\leq i<j$ with $p_i=p_j$, yielding
$p_ix_0^{j-i}=p_i$, thus $\mathbb{A}$ admits $P_g$ with $p=p_i$, $q=n+1$, $x=x_0^{j-i}$ and $y=n$.

Hence, the above construction is indeed a logspace reduction
from $\textsc{DAG-Reach}$ to the complement of $\textsc{GenDef}$, showing $\mathbf{NL}$-hardness of the latter;
applying $\mathbf{NL}=\mathrm{co}\mathbf{NL}$ again, we get $\mathbf{NL}$-hardness of $\textsc{GenDef}$ itself.
\end{proof}

It is worth observing that the same construction also shows $\mathbf{NL}$-hardness (thus completeness) of the
problem whether the input automaton accepts a definite language.

Thus, the complexity of the problem is characterized from the theoretic point of view.
However, nondeterministic algorithms are not that useful in practice. Since $\mathbf{NL}\subseteq\mathbf{P}$, the
problem is solvable in polynomial time -- now we give an efficient (quadratic) deterministic decision algorithm:

\begin{enumerate}
\item Compute $\mathbb{A}'=(Q,\Sigma,\delta,q_0,F)$, the reduced form of the input automaton $\mathbb{A}$.
\item Compute $\Gamma(\mathbb{A}')$, the component graph of $\mathbb{A}'$.
\item If there exists a nontrivial, non-sink component, reject the input.
\item Compute $\mathbb{B}=\mathbb{A}'\times\mathbb{A}'$ and $\Gamma(\mathbb{B})$.
\item Check whether there exist a state $(p,q)$ of $\mathbb{B}$ in a nontrivial component (of $\mathbb{B}$)
  for some $p\neq q$ with $p$ being in the same sink as $q$ in $\mathbb{A}$. If so, reject the input; otherwise accept it.
\end{enumerate}

The correctness of the algorithm is straightforward by Theorem~\ref{thm-pattern}: after minimization
(which takes $\mathcal{O}(n\log n)$ time) one computes the component graph of the reduced automaton
(taking linear time) and checks whether there exists a nontrivial component which is not a sink
(taking linear time again, since we already have the component graph). If so, then the answer is $\texttt{NO}$.
Otherwise one has to check whether there is a (sink) component $C$ and a word $x\in\Sigma^+$ such that
$f_x|_C$ has at least two different fixed points. Now it is equivalent to ask
whether there is a state $(p,q)$ in $\mathbb{A}'\times\mathbb{A}'$ with $p$ and $q$ being in the same
component and a word $x\in\Sigma^+$ with $(p,q)x=(p,q)$. This is further equivalent to ask whether
there is a $(p,q)$ with $p,q$ being in the same sink such that $(p,q)$ is in a nontrivial component of $\mathbb{B}$.
Computing $\mathbb{B}$ and its components takes $\mathcal{O}(n^2)$ time, and (since we still have the component graph of $\mathbb{A}$)
checking this condition takes constant time for each state $(p,q)$ of $\mathbb{B}$, the algorithm consumes a total of $\mathcal{O}(n^2)$
time.

Hence we have an upper bound concluding this subsection:
\begin{theorem}
Problem $\textsc{GenDef}$ can be solved in $\mathcal{O}(n^2)$ deterministic time in the RAM model of computation.
\end{theorem}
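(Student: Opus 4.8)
The plan is to prove the bound by exhibiting and analysing the five-step deterministic algorithm stated above, whose correctness rests entirely on the equivalence i)$\Leftrightarrow$ii)$\Leftrightarrow$iii) of Theorem~\ref{thm-pattern}. First I would use the fact that a DFA can be minimized in $\mathcal{O}(n\log n)$ time, so that after Step~1 we may work with the reduced automaton $\mathbb{A}'$ of the same language; by the theorem, $L(\mathbb{A}')$ is generalized definite iff $\mathbb{A}'$ satisfies condition ii). Condition ii) splits into two independent parts: every nontrivial component of $\mathbb{A}'$ is a sink, and no nonempty word induces a permutational transformation on any sink. The first part is handled by Steps~2--3: the component graph $\Gamma(\mathbb{A}')$ is computable in linear time (a standard strongly-connected-components computation), a component is nontrivial exactly when it has at least two states or carries a self-loop in $\Gamma$, and it is a sink exactly when it has no out-edge to another component; all of this is read off $\Gamma$ in linear time.

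For the second part, handled by Steps~4--5, I would prove the key reduction: some nonempty word acts permutationally on some sink of $\mathbb{A}'$ if and only if the product automaton $\mathbb{B}=\mathbb{A}'\times\mathbb{A}'$ has a state $(p,q)$ with $p\neq q$, $p\sim q$ in $\mathbb{A}'$, and $(p,q)$ lying in a nontrivial component of $\mathbb{B}$. For the forward direction, if $u|_C$ permutes a set $D\subseteq C$ with $|D|>1$, then $x=u^{|D|!}$ fixes every element of $D$, so for any two distinct $p,q\in D$ the state $(p,q)$ is fixed by the nonempty word $x$, hence lies on a cycle, i.e.\ in a nontrivial component of $\mathbb{B}$; and $p\sim q$ since $\{p,q\}\subseteq D\subseteq C$. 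For the converse, if $(p,q)x=(p,q)$ for some $x\in\Sigma^+$ with $p\sim q$, $p\neq q$, then --- since Step~3 has already guaranteed that the component of $p$, being nontrivial (it has $\geq 2$ states), is a sink $C$ --- the transformation $x|_C\colon C\to C$ is the identity on $\{p,q\}$ and thus permutational. It remains to note that a state of $\mathbb{B}$ lies in a nontrivial component iff some nonempty word fixes it (a state $s$ with $su=s$, $u\in\Sigma^+$, cannot sit in a trivial component, and conversely a nontrivial component of $s$ yields such a $u$). Thus Step~5 tests exactly the negation of the second part of ii).

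For the running time, I would account the steps: minimization $\mathcal{O}(n\log n)$; computing $\Gamma(\mathbb{A}')$ and scanning it, linear; building $\mathbb{B}$ (which has $n^2$ states) together with $\Gamma(\mathbb{B})$, $\mathcal{O}(n^2)$ (taking the alphabet size to be constant); and, since the array $c$ of $\mathbb{A}'$ lets us decide $p\sim q$ in $\mathcal{O}(1)$ on a RAM and the component graph of $\mathbb{B}$ lets us decide nontriviality of a component in $\mathcal{O}(1)$, Step~5 costs $\mathcal{O}(1)$ per state of $\mathbb{B}$, i.e.\ $\mathcal{O}(n^2)$ overall. Summing gives the claimed $\mathcal{O}(n^2)$ bound on a RAM. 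The main obstacle is the key reduction of the second paragraph --- collapsing the quantifier over all words and all sinks into a single cycle-search in the fixed finite graph $\mathbb{B}$, via the power trick $u\mapsto u^{|D|!}$ and the identification of ``nontrivial component of $\mathbb{B}$'' with ``fixed by some nonempty word''; the time bounds themselves (Hopcroft minimization, linear-time SCC) are routine.
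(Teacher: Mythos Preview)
Your proposal is correct and follows essentially the same approach as the paper: minimize, test the first half of condition~ii) via the component graph of $\mathbb{A}'$, and test the second half by searching for an off-diagonal state $(p,q)$ with $p\sim q$ lying in a nontrivial component of the product automaton $\mathbb{B}$. In fact you supply more detail than the paper does on the key equivalence (the paper simply asserts it, while you spell out both directions via the power trick $u\mapsto u^{|D|!}$ and the observation that two common fixed points already witness permutationality).
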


\subsection{Syntactic complexity}
The \emph{syntactic complexity} of a language is the size of its syntactic semigroup, the latter being isomorphic to the transformation semigroup
$\mathcal{T}(\mathbb{A})$ of the minimal automaton $\mathbb{A}$ of the language (equipped with function composition
as product).
The \emph{syntactic complexity} of a \emph{class} $\mathcal{C}$ of regular languages is a function $n\mapsto f(n)$ where $f(n)$ is the maximal syntactic complexity a member of $\mathcal{C}$ can have whose minimal automaton has at most $n$ states.

In~\cite{brzozo} it has been shown that the class of definite languages has syntactic complexity $\geq \lfloor e\cdot(n-1)!\rfloor$,
thus the same lower bound also applies for the larger class of generalized definite languages.

\begin{theorem}
The syntactic complexity of the definite and that of the generalized definite languages coincide.
\end{theorem}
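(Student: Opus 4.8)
The plan is to prove the nontrivial inequality: the syntactic complexity of generalized definite languages is at most that of definite languages. One direction is trivial, since every definite language is generalized definite, so the syntactic complexity function of definite languages is pointwise below that of generalized definite languages. For the reverse, I would take a generalized definite language $L$ with minimal automaton $\mathbb{A}=(Q,\Sigma,\delta,q_0,F)$ having at most $n$ states, and construct a \emph{definite} language $L'$ whose minimal automaton has at most $n$ states and whose transformation semigroup $\mathcal{T}(\mathbb{A}_{L'})$ has size at least $|\mathcal{T}(\mathbb{A})|$. The natural idea is to keep the ``essential'' part of $\mathbb{A}$ — by Theorem~\ref{thm-pattern}, the only structure that matters for the semigroup is carried by the sinks, on each of which every nonempty word acts non-permutationally — and replace the transient part by something that is forced to be definite.

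First I would invoke Theorem~\ref{thm-pattern}(ii): in $\mathbb{A}$ every nontrivial component is a sink, and on each sink $C$ every nonempty word induces a non-permutational transformation of $C$. A key preliminary observation is that a transformation semigroup $T\subseteq Q^Q$ in which \emph{every} element is non-permutational embeds into the transformation semigroup of a single-sink automaton of the same size; and more usefully, the semigroup $\mathcal{T}(\mathbb{A})$ of a generalized definite $\mathbb{A}$ is determined by its restrictions to the sinks. Concretely, the map sending $u^{\mathbb{A}}$ to the tuple of restrictions $(u|_{C})_{C\text{ a sink}}$ together with the finitely many ``prefix'' behaviours of short words is injective once words are long enough; but rather than count this way, I would aim to produce a definite automaton directly.

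The construction I have in mind: pick a sink $C_0$ of $\mathbb{A}$ that ``sees'' the largest part of the semigroup — more carefully, I would build a definite automaton $\mathbb{B}$ on a state set of size $\le n$ by taking $Q$ itself but redirecting all transitions so that $q_0$ and the transient states are pushed, after reading a single letter, into the union of the sinks, arranged so that long-enough words act on $\mathbb{B}$ exactly as they do on $\mathbb{A}$ restricted to the sinks. Because every nonempty word is non-permutational on each sink, $\mathbb{B}$ will avoid $P_d$: indeed, a word $x$ with two fixed points $p\ne q$ in $\mathbb{B}$ would give two fixed points in some sink $C$ of $\mathbb{A}$, contradicting non-permutationality there (a non-permutational transformation with a fixed point cannot also fix a second point, since then it would permute $\{p,q\}$). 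Hence $L(\mathbb{B})$ is definite. Choosing $F$-sets appropriately on $\mathbb{B}$ (e.g.\ so that $\mathbb{B}$ is reduced and its accepting set separates exactly the transformations that $\mathcal{T}(\mathbb{A})$ separates on the sinks), I get $|\mathcal{T}(\mathbb{B})|\ge|\mathcal{T}(\mathbb{A})|$, and $\mathbb{B}$ has at most $n$ states, finishing the argument.

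The main obstacle, and the step needing the most care, is the bookkeeping that ensures the constructed definite automaton $\mathbb{B}$ (i) has at most $n$ states, (ii) is \emph{reduced} (so that its syntactic complexity really equals $|\mathcal{T}(\mathbb{B})|$, not something smaller after minimization), and (iii) realizes every transformation in $\mathcal{T}(\mathbb{A})$ — i.e.\ that collapsing the transient part and rerouting into the sinks does not identify two words that $\mathbb{A}$ distinguishes. For (iii) the point is that two words $u,v$ with $u^{\mathbb{A}}\ne v^{\mathbb{A}}$ must already differ on some sink (since if $u$ and $v$ agreed on every sink then, by a pumping/idempotent argument using Lemma~\ref{lemma-power}, they would act identically on all of $Q$ after a harmless prefix — and the semigroup only contains long-enough behaviours), so it suffices that $\mathbb{B}$ faithfully carries the action on the disjoint union of the sinks, which it does by construction. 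I would handle the reducedness requirement by, if necessary, adding at most the already-present states and a careful choice of final states, arguing that any two distinct states of $\mathbb{B}$ lying in the sinks are separated because they were separated in the reduced automaton $\mathbb{A}$. Once these three points are nailed down, the equality of the two syntactic complexity functions follows immediately.
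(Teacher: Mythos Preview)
Your overall plan---build a definite automaton $\mathbb{B}$ on the same state set whose transformation semigroup is at least as large as $\mathcal{T}(\mathbb{A})$---matches the paper's, but the mechanism you propose for step (iii) does not work, and the paper's actual argument is different in an essential way.

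The claim ``two words $u,v$ with $u^{\mathbb{A}}\ne v^{\mathbb{A}}$ must already differ on some sink'' is false. Take $Q=\{1,2,3\}$, sink $\{3\}$, with $1a=2$, $1b=3$, $2a=2b=3$, $3a=3b=3$, $q_0=1$, $F=\{2\}$. This automaton is reduced and (generalized) definite; $a^{\mathbb{A}}$ and $b^{\mathbb{A}}$ are distinct, yet both restrict to the identity on the unique sink. The parenthetical justification (``the semigroup only contains long-enough behaviours'') is simply wrong: $\mathcal{T}(\mathbb{A})$ contains the transformations induced by single letters. Consequently, rerouting the transient part while keeping the alphabet $\Sigma$ will in general \emph{shrink} the transformation semigroup, not preserve it. A second gap: if $\mathbb{A}$ has several sinks and you keep them, then for any word $u$ the power $u^{|Q|}$ has a fixed point in each sink, so $\mathbb{B}$ admits $P_d$; your argument that two fixed points must lie ``in some sink $C$'' overlooks the case where they lie in different sinks.

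The paper avoids both problems by \emph{not} attempting to embed $\mathcal{T}(\mathbb{A})$ into $\mathcal{T}(\mathbb{B})$. Instead it writes $Q=Q_0\uplus Q_1\uplus\cdots\uplus Q_c$ (transient part and sinks, $|Q_1|\le\cdots\le|Q_c|$), uses the crude bound $|\mathcal{T}(\mathbb{A})|\le\prod_{k}|\mathcal{T}_k|$ where $\mathcal{T}_k=\{f|_{Q_k}:f\in\mathcal{T}(\mathbb{A})\}$, and then builds $\mathbb{B}$ over a \emph{new, larger} alphabet: keep $\mathcal{T}_c$ on the largest sink, replace each smaller $\mathcal{T}_k$ ($0<k<c$) by all of $Q_c^{Q_k}$, and replace $\mathcal{T}_0$ by all elevating maps $Q_0\to Q$. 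The resulting set $\mathcal{T}'$ of source-tuplings has $|\mathcal{T}'|\ge|\mathcal{T}(\mathbb{A})|$, and $\mathbb{B}=(Q,\mathcal{T}',\delta',q_0,F)$ with $\delta'(q,f)=f(q)$ has a single sink $Q_c$, avoids $P_d$ (any two fixed points would already be fixed by some $t_c\in\mathcal{T}_c$, contradicting non-permutationality on $Q_c$), and is reduced. So the key idea you are missing is to enlarge the alphabet and compare sizes via the product decomposition rather than to seek a faithful action over the original $\Sigma$.
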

\begin{proof}
It suffices to construct for an arbitrary reduced automaton $\mathbb{A}=(Q,\Sigma,\delta,q_0,F)$ recognizing a generalized definite language
a reduced automaton $\mathbb{B}=(Q,\Delta,\delta',q_0,F')$ for some $\Delta$
recognizing a definite language such that $|\mathcal{T}(\mathbb{A})|\leq |\mathcal{T}(\mathbb{B})|$.

By Theorem~\ref{thm-pattern}, if $L(\mathbb{A})$ is generalized definite and $\mathbb{A}$ is reduced, then $Q$ can be partitioned as a disjoint union
$Q=Q_0\uplus Q_1\uplus\ldots\uplus Q_c$ for some $c>0$ such that each $Q_i$ with $i\in[c]$ is a sink of $\mathbb{A}$ and $Q_0$ is the
(possibly empty) set of those states that belong to a trivial component. Without loss of generality we can assume that
$Q=[n]$ and $Q_0=[k]$ for some $n$ and $k$, and that for each $i\in[k]$ and $a\in \Sigma$, $i<ia$.
The latter condition is due to the fact that reachability restricted to the set $Q_0$ of states in trivial components is a
partial ordering of $Q_0$ which can be extended to a linear ordering.
Clearly, if $Q_0$ is nonempty, then by connectedness $q_0=1$ has to hold; otherwise $c=1$ and we again may assume $q_0=1$.
Also, $Q_i\Sigma\subseteq Q_i$ for each $i\in[c]$, and let $|Q_1|\leq |Q_2|\leq\ldots\leq|Q_c|$.

Then, each transformation $f:Q\to Q$ can be uniquely written as the source tupling $[f_0,\ldots,f_c]$ of some functions $f_i:Q_i\to Q$
with $f_i:Q_i\to Q_i$ for $0<i\leq c$.
For any $[f_0,\ldots,f_c]\in\mathcal{T}=\mathcal{T}(\mathbb{A})$ the following hold: $f_0(i)>i$ for each $i\in[k]$, and $f_j$ is non-permutational
on $Q_j$ for each $j\in[c]$.
For $k=0,\ldots,c$, let $\mathcal{T}_k$ stand for the set $\{f_k:f\in\mathcal{T}\}$ (i.e. the set of functions $f|_{Q_k}$ with
$f\in\mathcal{T}$). Then, $|\mathcal{T}|\leq \mathop\prod\limits_{0\leq k\leq c}|\mathcal{T}_k|$.

If $|Q_c|=1$, then all the sinks of $\mathbb{A}$ are singleton sets.
Thus there are at most two sinks, since if $C$ and $D$ are singleton sinks whose
members do not differ in their finality, then their members are not distinguishable, thus $C=D$ since $\mathbb{A}$ is reduced.
Such automata recognize reverse definite languages,
having a syntactic semigroup of size at most $(n-1)!$ by \cite{brzozo}, 
thus in that case $\mathbb{B}$ can be chosen to an arbitrary definite automaton having $n$ state and a syntactic semigroup of size
at least $\lfloor e(n-1)!\rfloor$ (by the construction in \cite{brzozo}, such an automaton exists).
Thus we may assume that $|Q_c|>1$. (Note that in that case $Q_c$ contains at least one final and at least one non-final state.)

Let us define the sets $\mathcal{T}'_k$ of functions $Q_i\to Q$ as $\mathcal{T}'_0$ is the set of all elevating functions from $[k]$ to $[n]$,
$\mathcal{T}'_c=\mathcal{T}_c$ and for each $0<k<c$, $\mathcal{T}'_k=Q_c^{Q_k}$. Since $\mathcal{T}_k\subseteq Q_k^{Q_k}$ and $|Q_k|\leq |Q_c|$
for each $k\in[c]$, we have $|\mathcal{T}_k|\leq|\mathcal{T}'_k|$ for each $0\leq k\leq c$. Thus defining $\mathcal{T}'=\{[f_0,\ldots,f_c]:f_i\in\mathcal{T}'_i\}$ it holds that $|\mathcal{T}|\leq|\mathcal{T}'|$.

We define $\mathbb{B}$ as $(Q,\mathcal{T}',\delta',q_0,F)$ with $\delta'(q,f)=f(q)$ for each $f\in\mathcal{T}'$. We show that
$\mathbb{B}$ is a reduced automaton avoiding $P_d$, concluding the proof.

First, observe that $\mathbb{B}$ has exactly one sink, $Q_c$, and all the other states belong to trivial components
(since by each transition, each member of $Q_0$ gets elevated, and each member of $Q_i$ with $0<i<c$ is taken into $Q_c$).
Hence if $\mathbb{B}$ admits $P_d$, then $pt=p$ and $qt=q$ for some distinct pair $p,q\in Q_c$ of states and $t=[t_0',\ldots,t_c']\in\mathcal{T}'$.
This is further equivalent to $pt'_c=p$ and $qt'_c=q$ for some $p\neq q$ in $Q_c$ and $t'_c\in\mathcal{T}'_c$.
By definition of $\mathcal{T}'_c=\mathcal{T}_c$, there exists a transformation of the form $t=[t_0,\ldots,t_{c-1},t'_c]\in\mathcal{T}$
induced by some word $x$, thus $px=p$ and $qx=q$ both hold in $\mathbb{A}$,
and since $p,q$ are in the same sink, there also exists a word $y$ with $py=q$. Hence $\mathbb{A}$ admits $P_g$, a contradiction.

Second, $\mathbb{B}$ is connected. To see this, observe that each state $p\neq 1$ is reachable from $1$ by any transformation of the form
$t=[f_p,t_1,\ldots,t_c]$ where $f_p:[k]\to[n]$ is the elevating function with $1f_p=p$ and $if_p=n$ for each $i>1$.
Of course $1$ is also trivially reachable from itself, thus $\mathbb{B}$ is connected.

Also, whenever $p\neq q$ are different states of $\mathbb{B}$, then they are distinguishable by some word.
To see this, we first show this for $p,q\in Q_c$. Indeed, since $\mathbb{A}$ is reduced, some transformation $t=[t_0,\ldots,t_c]\in\mathcal{T}$
separates $p$ and $q$ (exactly one of $pt=pt_c$ and $qt=qt_c$ belong to $F$). Since $\mathcal{T}_c=\mathcal{T}'_c$, we get that
$p$ and $q$ are also distinguishable by in $\mathbb{B}$ by any transformation of the form $t'=[t_0',\ldots,t_{c-1}',t_c]\in\mathcal{T}'$.
Now suppose neither $p$ nor $q$ belong to $Q_c$. Then, since $\{[t_0',\ldots,t_{c-1}']:t_i'\in\mathcal{T}_i'\}=Q_c^{Q\backslash Q_c}$,
and $|Q_c|>1$, there exists some $t=[t_0',\ldots,t_{c-1}']$ with $pt\neq qt$, thus any transformation of the form
$[t_0',\ldots,t_{c-1}',t_c]\in\mathcal{T}'$ maps $p$ and $q$ to distinct elements of $Q_c$, which are already known to be distinguishable,
thus so are $p$ and $q$. Finally, if $p\in Q_c$ and $q\notin Q_c$, then let $t_c\in\mathcal{T}_c$ be arbitrary and
$t'=[t_0',\ldots,t_{c-1}]\in Q_c^{Q\backslash Q_c}$ with $qt'\neq pt_c$. Then $[t',t_c]$ again maps $p$ and $q$ to distinct states of $Q_c$.

Thus $\mathbb{B}$ is reduced, concluding the proof: $\mathbb{B}$ is a reduced automaton recognizing a definite language
and having a syntactic semigroup $\mathcal{T}'$ with $|\mathcal{T}'|\geq|\mathcal{T}|$.
\end{proof}

\subsection{Upper bound for syntactic complexity}
By \cite{brzozo} we know a lower bound $\lfloor e(n-1)!\rfloor$ for the syntactic complexity of the definite languages
(thus, of the generalized definite ones as well). In this subsection we give an upper bound $n!$, showing that the bound of \cite{brzozo}
is asymptotically optimal up to a logarithmic factor (since $n=\mathcal{O}(\log n!)$).

Let $\mathbb{A}=(Q,\Sigma,\delta,q_0,F)$ be a reduced automaton recognizing a definite language $L$ and let $\mathcal{T}\subseteq Q^Q$
be its syntactic semigroup. Then, each member $t$ of $\mathcal{T}$ is non-permutational and has a unique fixed point $\mathrm{fix}(t)$.
For each $p\in Q$, let $\mathcal{T}_p$ stand for the subset $\{t\in\mathcal{T}:\mathrm{fix}(t)=p\}$ of $\mathcal{T}$: then, $\mathcal{T}$
is the disjoint union of the sets $\mathcal{T}_p$. Observe that $\mathcal{T}_p$ is a semigroup for each $p$, since whenever
$\mathrm{fix}(t)=\mathrm{fix}(t')=p$, then $ptt'=p$, thus $p$ is a fixed point of $tt'$ (and by assumption, the superset
$\mathcal{T}$ of $\mathcal{T}_p$ is a semigroup consisting only non-permutational transformations). Thus $tt'\in\mathcal{T}_p$ as well.

\begin{lemma}
For each $p\in Q$, $|\mathcal{T}_p|\leq (n-1)!$.
\end{lemma}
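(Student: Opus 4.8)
The plan is to build a partial order on the state set $Q$ out of the semigroup $\mathcal{T}_p$, arranged so that $p$ is the least element and every transformation in $\mathcal{T}_p$ moves each non-$p$ state strictly downward, and then simply count the maps that behave this way. If $\mathcal{T}_p=\emptyset$ the claim is trivial, so assume it is nonempty and set $M=\mathcal{T}_p\cup\{\mathrm{id}_Q\}$, which is a submonoid of $Q^Q$ because $\mathcal{T}_p$ is a semigroup and composing with the identity changes nothing. Define $q\preceq q'$ to mean $q=m(q')$ for some $m\in M$. Reflexivity and transitivity are immediate from $M$ being closed under composition.

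The crucial point — and the step I expect to be the main obstacle — is antisymmetry, and this is exactly where the hypothesis that each element of $\mathcal{T}$ is non-permutational with a unique fixed point is used. Suppose $q=m(q')$ and $q'=m'(q)$ with $m,m'\in M$ but $q\neq q'$. Then the composite applying $m'$ and then $m$, which in the paper's convention is $m'm\in M$, fixes $q$. Since $q\neq q'$, not both $m$ and $m'$ can be the identity, so $m'm\in\mathcal{T}_p$; but every member of $\mathcal{T}_p$ has unique fixed point $p$, so $q=p$, and by the symmetric argument $q'=p$, contradicting $q\neq q'$. Hence $\preceq$ is a partial order. I would then note that $p$ is its least element: for any $t\in\mathcal{T}_p$ the power $t^{|Q|}$ is constant by non-permutationality, and it fixes $p$, so $t^{|Q|}$ is the constant map onto $p$; applying it to an arbitrary $q$ gives $p\preceq q$.

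To finish, I would fix a linear extension $q_1<q_2<\dots<q_{n-1}$ of $\preceq$ restricted to $Q\setminus\{p\}$ (a topological sort). For any $t\in\mathcal{T}_p$ and any $q_i$ we have $t(q_i)\preceq q_i$ by definition of $\preceq$, and $t(q_i)\neq q_i$ since otherwise $q_i$ would be a fixed point of $t$ and hence equal to $p$; therefore $t(q_i)\in\{p,q_1,\dots,q_{i-1}\}$, a set of exactly $i$ elements. As $t(p)=p$ as well, the map $t\mapsto(t(q_1),\dots,t(q_{n-1}))$ is injective on $\mathcal{T}_p$, and its image is contained in a set of size $\prod_{i=1}^{n-1} i=(n-1)!$. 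This yields $|\mathcal{T}_p|\leq (n-1)!$. The only delicate part is the antisymmetry argument above; everything else is bookkeeping, and no property of $\mathbb{A}$ beyond what is already packaged into ``$\mathcal{T}_p$ is a semigroup of non-permutational transformations each with unique fixed point $p$'' is needed.
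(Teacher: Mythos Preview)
Your proof is correct and follows essentially the same approach as the paper: the paper builds a directed graph $G_p$ with edges $q\to qt$ for $q\neq p$ and $t\in\mathcal{T}_p$, shows it is acyclic via the unique-fixed-point property (your antisymmetry argument), linearly extends to an order with $p$ maximal, and counts the strictly increasing maps. Your order is simply the reverse of theirs (you make $p$ minimal and maps strictly decreasing), but the argument and the count $\prod_{i=1}^{n-1} i=(n-1)!$ are identical.
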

\begin{proof}
Let $G_p=(Q,E,\ell)$ be the edge-labelled graph on the set $Q$ of vertices in which $(q_1,q_2)$ is an edge labelled by $t\in\mathcal{T}_p$
if and only if $q_1t=q_2$ and $q_1\neq p$. Then $G_p$ is acyclic.

Indeed, suppose $q_1\mathop{\to}\limits^{t_1}q_2\mathop{\to}\limits^{t_2}\ldots\mathop{\to}\limits^{t_k}q_{k+1}=q_1$.
Then $q_1t_1t_2\ldots t_k=q_1$, thus $q_1$ is a fixed point of $t=t_1\ldots t_k\in \mathcal{T}_p$. Since in $G_p$ the vertex $p$
has outdegree $0$, $q_0\neq p$, hence $t$ has at least two distinct fixed points, a contradiction. Hence $G_p$ is acyclic.
Thus, there exists an ordering $\prec$ on $Q$ such that whenever $q_1t=q_2$ for some $q_1,q_2\in Q$, $q_1\neq p$ and $t\in\mathcal{T}_p$,
then $q_1\prec q_2$. Note also that $p$ is the maximal element of $\prec$.
Thus $\mathcal{T}_p$ consists of transformations $t:Q\to Q$ with $pt=p$, and $q\prec qt$ for each $q\in Q-\{p\}$. There are
$(n-1)!$ such transformations (the least element can be mapped to the other $n-1$ elements, the next to $n-2$ and so on),
concluding the lemma.
\end{proof}
\begin{corollary}
The syntactic complexity of definite languages is at most $n!$.
\end{corollary}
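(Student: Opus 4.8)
The plan is to obtain the bound simply by summing the estimate of the Lemma over all possible fixed points. Let $\mathbb{A}=(Q,\Sigma,\delta,q_0,F)$ be a reduced automaton recognizing a definite language, with $|Q|=n$, and let $\mathcal{T}=\mathcal{T}(\mathbb{A})$ be its transformation semigroup, which is isomorphic to the syntactic semigroup of $L(\mathbb{A})$. As recorded just before the Lemma, every $t\in\mathcal{T}$ is non-permutational and therefore has a \emph{unique} fixed point $\mathrm{fix}(t)\in Q$, so $\mathcal{T}$ decomposes as the disjoint union of the sets $\mathcal{T}_p=\{t\in\mathcal{T}:\mathrm{fix}(t)=p\}$ taken over the states $p\in Q$.

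From this decomposition I would conclude
\[ |\mathcal{T}| \;=\; \sum_{p\in Q} |\mathcal{T}_p| \;\le\; \sum_{p\in Q} (n-1)! \;=\; n\cdot (n-1)! \;=\; n!, \]
where the middle inequality is the Lemma applied to each summand and the count $n$ of summands is exactly $|Q|$. Since $\mathbb{A}$ was an arbitrary reduced $n$-state automaton recognizing a definite language, this says precisely that the syntactic complexity of the definite languages is at most $n!$.

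I do not expect any real obstacle here; the only points worth keeping straight are that the disjoint union is indexed by $Q$ (hence there are exactly $n$ terms, not more), and that the Lemma furnishes the same bound $(n-1)!$ for every $p$, so no case distinction is needed. One could try to shave the constant — for instance by observing that $\mathcal{T}_{q_0}$, or the classes attached to states lying in trivial components, are forced to be smaller — but this is unnecessary for the stated bound and would not change the asymptotics, which already match the $\lfloor e(n-1)!\rfloor$ lower bound of \cite{brzozo} up to the logarithmic factor noted above.
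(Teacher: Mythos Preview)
Your argument is correct and is essentially identical to the paper's own proof: it also decomposes $\mathcal{T}(\mathbb{A})$ as the disjoint union $\bigcup_{p\in Q}\mathcal{T}_p$ and multiplies the Lemma's bound $(n-1)!$ by $|Q|=n$. Your write-up is in fact slightly more detailed than the paper's one-line version.
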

\begin{proof}
For an arbitrary automaton $\mathbb{A}$ over $n$ states recognizing a definite language, $\mathcal{T}(\mathbb{A})=\mathop\bigcup_{p\in Q}\mathcal{T}_p$,
hence its size is at most $n\cdot (n-1)! = n!$.
\end{proof}

\section{Conclusion, further directions}
The forbidden pattern characterization of generalized definite languages we gave is not surprising, based on the identities of the pseudovariety
of (syntactic) semigroups corresponding to this variety of languages.
Still, using this characterization one can derive efficient algorithms for checking whether a given automaton recognizes such a language.
Though we could not compute an exact function for the syntactic complexity, we still managed to show that
these languages are not ``more complex'' than definite languages under this metric. Also, we gave a new upper bound for that.

The exact syntactic complexity of definite languages is still open, as well as for other language classes higher in the dot-depth hierarchy --
e.g. the locally (threshold) testable and the star-free languages.

\section*{Appendix}
In the Appendix we give a proof of Lemma~\ref{lemma-power} and that a regular language $L$ is definite if and only if its minimal automaton avoids $P_d$.

We will make use of the following variant of the multicolor Ramsey theorem, stated here only for monochromatic triangles.
\begin{theorem}
\label{thm-ramsey}
For any number $c>0$ of colors there exists an integer $R(c)$ such that whenever $G$ is an edge-colored complete graph
on at least $R(c)$ vertices that has at most $c$ colors, then $G$ contains a monochromatic triangle.
\end{theorem}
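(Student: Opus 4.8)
The plan is to prove this by induction on the number $c$ of colors, using the classical ``fix a vertex and pigeonhole on its incident edges'' argument. For the base case $c=1$, any complete graph on $3$ vertices is itself a monochromatic triangle, so $R(1)=3$ works.

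For the inductive step I would assume $R(c-1)$ has already been defined and set $R(c)=c\cdot(R(c-1)-1)+2$ (any larger value works equally well). Let $G$ be an edge-colored complete graph on at least $R(c)$ vertices using at most $c$ colors, and fix an arbitrary vertex $v$. Then $v$ is joined to at least $R(c)-1=c\cdot(R(c-1)-1)+1$ other vertices by edges each carrying one of at most $c$ colors, so by the pigeonhole principle there is a color, say color $1$, and a set $S$ of at least $R(c-1)$ vertices such that every edge from $v$ to a vertex of $S$ has color $1$. If some edge $\{s,s'\}$ with $s,s'\in S$ also has color $1$, then $\{v,s,s'\}$ is a monochromatic triangle and we are done. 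Otherwise every edge inside $S$ uses one of the remaining at most $c-1$ colors; since $|S|\geq R(c-1)$, applying the induction hypothesis to the complete graph induced on $S$ produces a monochromatic triangle already inside $S$, hence in $G$.

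Since the recursion $R(1)=3$ and $R(c)=c\cdot(R(c-1)-1)+2$ defines a finite integer for every $c$, this finishes the argument. I do not expect a genuine obstacle here, as this is a textbook result; the only point needing a little care is to choose $R(c)$ large enough that, after the pigeonhole step, the monochromatic neighborhood of $v$ still has at least $R(c-1)$ vertices, which is precisely what the stated recursion guarantees (and, as a sanity check, it reproduces the familiar values $R(2)=6$ and $R(3)=17$).
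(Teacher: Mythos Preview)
Your proof is correct: the induction on $c$ with the pigeonhole step at a fixed vertex is the standard argument, and your recursion $R(c)=c\,(R(c-1)-1)+2$ is exactly what is needed so that some color class in the neighborhood of $v$ has size at least $R(c-1)$.

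There is nothing to compare here, because the paper does \emph{not} prove Theorem~\ref{thm-ramsey}. It is quoted in the Appendix as a known variant of the multicolor Ramsey theorem and is used as a black box in the proof of Lemma~\ref{lemma-power}: one colors the edges $(i,j)$ of the complete graph on $[m]$ by the transformations $f_if_{i+1}\cdots f_{j-1}\in C^C$ and invokes Theorem~\ref{thm-ramsey} with $c=|C^C|$ colors to extract a monochromatic triangle, which yields the idempotent factor. Your write-up therefore supplies strictly more than the paper does for this statement, and it does so cleanly.
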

The theorem holds for monochromatic arbitrary-sized induced subgraphs as well but we need only the guaranteed appearance of
triangles to show that in a finite semigroup, a long enough product always has an idempotent factor.

\begin{proof}[of Lemma~\ref{lemma-power}]
Let $m=R(|C^C|)$ and let us define the following complete graph on $[m]$ with its edges colored by elements of $C^C$: let the color of the edge
$(i,j)$, $i<j$, be the element $f_{i,j}=f_if_{i+1}\ldots f_{j-1}\in C^C$.
Applying Theorem~\ref{thm-ramsey} we get that there exists integers $1\leq i<j<k\leq m$ with $(i,j)$, $(j,k)$ and $(i,k)$
having the same color, i.e. $f_{i,j}=f_{j,k}=f_{i,k}$, the last being the product of $f_{i,j}$ and $f_{j,k}$.
Hence, $f_{i,j}$ is an idempotent transformation of $C$.
\end{proof}

Now for the forbidden pattern characterization of definite languages:
\begin{theorem}
The following are equivalent for a reduced automaton $\mathbb{A}=(Q,\Sigma,\delta,q_0,F)$:
\begin{itemize}
\item[i)] $L(\mathbb{A})$ is definite.
\item[ii)] $\mathbb{A}$ avoids $P_d$.
\item[iii)] For each $u\in\Sigma^+$, $u^\mathbb{A}$ is non-permutational.
\item[iv)] $\mathbb{A}$ has a unique sink $C$,
  all its other components are trivial and for each $u\in\Sigma^+$, $u^\mathbb{A}|_C$ is non-permutational.
\end{itemize}
\end{theorem}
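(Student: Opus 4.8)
The plan is to prove the four conditions equivalent by establishing a cycle of implications, reusing as much of the machinery from Theorem~\ref{thm-pattern} as possible. First I would prove ii)$\leftrightarrow$iii): a nonempty word $u$ induces a permutational transformation if and only if there is a set $D$ with $|D|>1$ on which $u^\mathbb{A}$ permutes, and then $u^{|D|!}$ fixes every element of $D$, in particular two distinct states $p,q\in D$, so $\mathbb{A}$ admits $P_d$ via $(p,q,x)$ with $x=u^{|D|!}$; conversely if $\mathbb{A}$ admits $P_d$ with $px=p$, $qx=q$ and $p\neq q$, then $x^\mathbb{A}$ fixes both $p$ and $q$, hence restricts to the (nontrivial) identity permutation on $\{p,q\}$ and is permutational. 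This is routine and mirrors the i)$\to$ii) argument already given for $P_g$.

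Next I would show iii)$\to$iv). Assuming every nonempty word induces a non-permutational transformation, I first observe that $\mathbb{A}$ cannot have two distinct sinks: if $C\neq C'$ were sinks, pick $p\in C$ and $q\in C'$; since minimal automata are connected, some word $u$ reaches $p$ and (as $C$ is a sink and nontrivial) some nonempty word $v$ satisfies $pv=p$. Actually the cleanest route is: $\mathbb{A}$ avoids $P_d$ by ii), and avoiding $P_d$ is exactly avoiding $P_g$ minus one edge, so in particular $\mathbb{A}$ avoids $P_g$; then Theorem~\ref{thm-pattern}(ii) applies and gives that every nontrivial component is a sink and $u|_C$ is non-permutational for each sink $C$ and nonempty $u$. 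To upgrade ``every nontrivial component is a sink'' to ``unique sink, all other components trivial'', I would argue that if there were two distinct sinks $C_1,C_2$, then taking $p\in C_1$ with $pv=p$ for some $v\in\Sigma^+$ (possible since $C_1$ is nontrivial) and a nonempty word $y$ with $py=q\in C_2$, applying $v^{|Q|}$ — which is non-permutational hence eventually constant on $C_2$ — lets us find $x=v^{N}$ with $px=p$, and $qx=q$ after adjusting $q$ to the constant value, exhibiting $P_g$ (or just $P_d$ directly once both fixed points are in hand), contradicting ii). So there is a unique sink and, since every nontrivial component is that sink, all other components are trivial; the non-permutationality of $u^\mathbb{A}|_C$ is immediate from iii).

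Then iv)$\to$i) is a direct adaptation of the ii)$\to$iii) argument of Theorem~\ref{thm-pattern}: set $n=m(|Q|)$ from Lemma~\ref{lemma-power}, and for any $y\in\Sigma^n$ note that the path $q_0y$ has length $\geq|Q|$, so it enters the unique sink $C$; writing $y=y_1y_2y_3$ with $y_2$ inducing an idempotent — hence, being non-permutational on $C$, constant — transformation on $C$, we get that $q_0y$ depends only on $y$'s behaviour, and more precisely $q_0xy=q_0y$ for every $x\in\Sigma^*$, which is exactly $n$-definiteness. Finally i)$\to$ii) is already done in the main text (the paragraph showing a minimal automaton of a $k$-definite language avoids $P_d$). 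The main obstacle I anticipate is the bookkeeping in iii)$\to$iv): one must be careful that ``every nontrivial component is a sink'' plus reducedness genuinely forces a \emph{single} sink, and the constant-function argument needs the sink's transformation to be iterated enough times — but $|C|$ iterations suffice since a non-permutational transformation $f$ on a finite set satisfies $f^{|C|}$ constant, which is recorded in the Notation section. Everything else is a straightforward recycling of the $P_g$ proof with the $y$-edge deleted.
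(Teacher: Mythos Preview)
Your overall cycle i)$\to$ii)$\to$iii)$\to$iv)$\to$i) is sound and matches the paper's route, and the arguments for ii)$\leftrightarrow$iii), i)$\to$ii), and iv)$\to$i) are essentially the paper's (the paper takes $n=\max\{m(|Q|),|Q|\}$ rather than $n=m(|Q|)$ so that ``the path has length $\geq|Q|$'' is explicit; you are implicitly relying on $m(|Q|)\geq|Q|$, which is true but unstated).

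There is, however, a genuine slip in your iii)$\to$iv) argument for uniqueness of the sink. You write ``taking $p\in C_1$ \ldots\ and a nonempty word $y$ with $py=q\in C_2$'', but no such $y$ exists: $C_1$ is a sink, so $p\Sigma^*\subseteq C_1$ and you can never reach $C_2$ from $p$. Consequently you cannot exhibit $P_g$ here. Your parenthetical ``or just $P_d$ directly once both fixed points are in hand'' is the right rescue, and in fact the $y$-edge is entirely unnecessary: from $pv=p$ and the non-permutationality of $v|_{C_2}$ you get that $v^{|C_2|}$ is constant on $C_2$ with some value $q\in C_2$; then $p$ and $q$ are distinct fixed points of $x=v^{|C_2|}$, giving $P_d$ straight away. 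Once you drop the spurious $y$, your argument is correct.

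For comparison, the paper's iii)$\to$iv) is slightly more economical: it does not pass through Theorem~\ref{thm-pattern} at all. It fixes one sink $C$, assumes some \emph{other} nontrivial component $D$ exists (sink or not), picks $p\in D$ with $px_0=p$, and observes that $x_0^{|C|}$ then has two fixed points ($p$ and the constant value of $x_0^{|C|}|_C$), contradicting iii). This single step simultaneously shows that $C$ is the unique sink \emph{and} that every other component is trivial, whereas your route first borrows ``nontrivial $\Rightarrow$ sink'' from Theorem~\ref{thm-pattern} and then separately rules out a second sink.
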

\begin{proof}
{\bf i)$\to$ii).} Assume $L=L(\mathbb{A})$ is $k$-definite for some $k>0$, and
   $\mathbb{A}$ admits $P_d$ with $px=p$ and $qx=q$ for distinct states $p,q$ and word $x\in\Sigma^+$.
   Since $\mathbb{A}$ is reduced, $q_0z_p=p$ and $q_0z_q=q$ for some words $z_p,z_q$ and $p,q$ are distinguishable by some word $w$.
   Then, exactly one of the words $z_px^kw$ and $z_qx^kw$ belongs to $L$ but they share a common suffix of length $k$, a contradiction.

{\bf ii)$\to$iii).} Assume $u^\mathbb{A}$ is permutational for some $u\in\Sigma^+$. Let $D\subseteq Q$, $|D|>1$ be a set on which
  $u$ induces a permutation. Then $u^{|D|!}$ induces the identity on $D$, thus $\mathbb{A}$ admits $P_d$ with arbitrary $p,q\in D$
  and $x=u^{|D|!}$.
  
{\bf iii)$\to$iv).}  Obviously $\mathbb{A}$ has a sink $C$.
  If $u^\mathbb{A}$ is non-permutational for each $u\in\Sigma^+$, then $u^\mathbb{A}|_C$ is also non-permutational for each sink $C$.
  Hence, $u^{|C|}$ induces a constant function on $C$.
  Assume that there exists another nontrivial component $D\neq C$ of $\mathbb{A}$.
  Then $px_0=p$ for some $p\in D$ and $x_0\in\Sigma^+$. Thus, $x_0^{|C|}$ induces a permutational transformation on $Q$ (with fixed
  points $p\in D$ and the unique element of $Cx_0^{|C|}$), a contradiction.
  
{\bf iv)$\to$i).} Analogously to the direction ii)$\to$iii) of the proof of Theorem~\ref{thm-pattern}.
  Suppose the condition of iv) holds.
  Let $n=\max\{m(|Q|),|Q|\}$ be the value defined in Lemma~\ref{lemma-power}.
  Let $x=yx_2$ with $x_2\in\Sigma^n$, $y\in\Sigma^*$.
  It suffices to show that $q_0yx_2=q_0x_2$.
  Since $n\geq |Q|$, both $q_0yx_2$ and $q_0x_2$ belong to the unique sink $C$ of $\mathbb{A}$.
  By Lemma~\ref{lemma-power}, $x_2$ can be written as 
  $x_2=x_{2,1}x_{2,2}x_{2,3}$ with $x_{2,2}$ inducing an idempotent function on $C$.
  Since the function induced by $x_{2,2}$ is also non-permutational on $C$, it is a constant function on $C$,
  hence $x_2$ induces a constant function as well. Thus $q_0yx_2=q_0x_2$ and $L(\mathbb{A})$ is $n$-definite.
\end{proof}

\end{document}